\newtheorem{proposicion}{Proposition}
\begin{document}

\title{An analysis of Born-Infeld determinantal gravity in Weitzenb\"{o}ck spacetime}

\author{Franco Fiorini}
\email{francof@cab.cnea.gov.ar.}
\affiliation{Instituto Balseiro, Universidad Nacional de Cuyo, R8402AGP Bariloche, Argentina.}

\author{Nicolas Vattuone}
\email{nicolas.vattuone@ib.edu.ar}
\affiliation{Instituto Balseiro, Universidad Nacional de Cuyo, R8402AGP Bariloche, Argentina.}

\keywords{Teleparallelism, Born-Infeld, Cosmology}

\begin{abstract}
The Born-Infeld theory of the gravitational field formulated in Weitzenb\"{o}ck spacetime is studied in detail. The action, constructed quadratically upon the torsion two-form, reduces to Einstein gravity in the low field limit where the Born-Infeld constant $\lambda$ goes to infinity, and it is described by second order field equations for the vielbein field in $D$ spacetime dimensions. The equations of motion are derived, and a number of properties coming from them are discussed. In particular, we show that under fairly general circumstances, the equations of motion are those of Einstein's General Relativity plus an energy-momentum tensor of purely geometrical character. This tensor is obtained solely from the parallelization defining the spacetime structure, which is encoded in a set of $D$ smooth, everywhere non-null, globally defined 1-forms $e^{a}$. Spherical symmetry is studied as an example, and we comment on the emergence of the Schwarzschild geometry within this framework. Potential (regular) extensions of it are envisioned.
\end{abstract}
\maketitle

\section{Introduction: Born-Infeld gravity}

Born-Infeld (BI) gravitational actions are adapted versions for the gravitational field of the (once quite innovative) BI electrodynamics, whose undeniable curative virtues concerning the singularity problem of the point-like charge electromagnetic field, were recognized since its very conception during the thirties \cite{BI1},\cite{BI2}. In a previous article \cite{Nos1}, we have introduced a novel approach to BI gravity, the so called BI determinantal gravity in Weitzenb\"{o}ck spacetime (see also \cite{DAS} and \cite{Mariam} for further results). This theory exhibits a number of regular solutions in cosmological contexts \cite{Nos2},\cite{Nos3}, where strong curvature singularities as the Big Bang are removed by a natural (i.e., geometrical) mechanism. In particular, the existence of regular black hole solutions in pure vacuum constitutes a promising feature of this theory, a property not shared with other BI-like candidates as, for instance, \emph{Eddington-Born-Infeld} gravity \cite{Max},\cite{Max3}. On the other hand, in the geometrical structure provided by the Weitzenb\"{o}ck space, the equations of motion for the vielbein field $e^{a}$ are of second order due to the fact that the gravitational action is constructed from the torsion tensor $T^{a}=de^{a}$. This remarkable aspect is certainly one that other BI-like approaches to the gravitational field would wish to possess \cite{deser}-\cite{Cagri6}. Alternative lines of research concerning BI actions for the gravitational field are being the object of considerable attention nowadays \cite{fder1}-\cite{fder4}.

In the present paper we will derive the full field equations in $D$ spacetime dimensions, coming from the determinantal structure studied previously just in highly symmetric (mostly cosmological) manifolds. The so obtained dynamical equations enable us to discuss certain important and general issues arising from them in regard to the behavior of the theory in the strong field limit, where it actually shows a more rich dynamical response than its low energy counterpart given by General Relativity (GR).  In order to proceed, we will briefly introduce the absolute parallelism (teleparallel) description of GR, as well as the deformed BI determinantal gravity based on it. For details on the former and its relationship with the gauge approach to gravity, see \cite{Hehl1}-\cite{Hehl3}, while a thorough description of the latter can be found in \cite{Nos1} and \cite{Nos2}. In what follows we shall adopt the signature $(-,+,+,...)$, and, as usual, Latin indexes $a:0,1,...$ refer to tangent-space objects, while Greek $\mu:0,1,...$ to spacetime components. Throughout the paper, the symbol $|\,|$ alludes to the absolute value of the determinant, and symmetric and skew-symmetric tensor components are expressed by $(\,)$ and $[\,]$ respectively.

In the absolute parallelism formulation of GR, the Einstein equations with cosmological constant $\Lambda$ are
\begin{equation}
|e|^{-1} \partial_\mu ( |e| S_a^{\mu \nu})-S_\rho^{~ \mu \nu} T^{\rho}_{ ~ \mu a} + \frac{1}{4} e^\nu_a (T-2\Lambda) = 4\pi G \mathbf{T}_a^{~ \nu}
\label{eq: ETRG},
\end{equation}
where $\mathbf{T}_a^{~ \nu}$ is the energy-momentum tensor of the matter fields, $T_{\ \ \mu \nu }^{\rho}=e_{a}^{\rho}\,(\partial _{\mu }e_{\nu }^{a}-\partial _{\nu}e_{\mu }^{a})$ are the components of the torsion 2-form which arises as a consequence of the non symmetric character of Weitzenb\"{o}ck connection $\Gamma _{\mu \nu }^{\lambda}=e_{a}^{\lambda }\,\partial _{\nu }e_{\mu }^{a}$, and $e_{\mu }^{a}$ are the components of the vielbein field (with inverse $e_{a }^{\mu}$), related with the components of the metric tensor according to $g_{\mu\nu}=\eta _{ab}\,e^{a}_{\mu}e^{b}_{\nu}$. The tensor $S_{\rho }^{\ \ \mu \nu }$ appearing in (\ref{eq: ETRG}) is constructed solely in terms of the torsion, and reads
\begin{equation}
S_{\rho }^{\ \ \mu \nu }=\frac{1}{4}\,(T_{\ \ \ \rho }^{\nu \mu }+T_{\rho }^{\ \ \mu \nu
}-T_{\ \ \ \rho }^{\mu \nu
})+\frac{1}{2}(\delta _{\rho }^{\mu }\,T_{\ \ \ \theta }^{\theta
\nu }-\delta _{\rho }^{\nu }\,T_{\ \ \ \theta }^{\theta \mu }).
\notag  \label{tensorS}
\end{equation}
The important spacetime scalar $T$ in (\ref{eq: ETRG}), is the so called Weitzenb\"{o}ck invariant
\begin{equation}\label{weitinvariant}
T=S_{\rho }^{\ \ \mu \nu }T^{\rho}_{\ \ \mu \nu}.
\end{equation}
The equations of motion (\ref{eq: ETRG}) can be derived from the absolute parallelism (teleparallel) action in $D$ dimensions

\begin{equation}\label{acciontele}
I=\frac{1}{16 \pi G} \int d^{D}x\,|e|\, (T-2\Lambda)+ \int d^{D}x\,|e|\, L_{\mathcal{M}},
\end{equation}
where, just as in eqs. (\ref{eq: ETRG}), $|e|\doteq |g|^{1/2}$, and $L_{\mathcal{M}}$ refers to the matter Lagrangian density. The fact that (\ref{acciontele}) is actually equivalent to the Hilbert-Einstein action, can be seen from the identity $R=-T +\ 2\,|e|^{-1}\,(|e|T^{\mu\, \, \rho}_{\ \,\mu})_{,\,\rho}$, where $R$ is the scalar curvature coming from the Levi Civita connection.

The BI gravitational action in $D$-dimensional Weitzenb\"{o}ck spacetime represents a high energy deformation of (\ref{acciontele}), and is given by
\begin{equation}
I_{\mathbf{BIG}}=\frac{\lambda}{16 \pi G} \int d^{D}x\Big[\sqrt{|g_{\mu
\nu}+2\lambda ^{-1}F_{\mu \nu }|}-\Delta\sqrt{|g_{\mu \nu }|}\Big],
\label{acciondetelectro}
\end{equation}
where
\begin{equation}
F_{\mu \nu }=\alpha \,F^{(1)}_{\mu \nu}+\beta \,F^{(2)}_{\mu \nu}+\gamma \,F^{(3)}_{\mu \nu},  \label{tensorF}
\end{equation}%
and each component $F^{(i)}_{\mu \nu}$ is defined according to
\begin{equation}
F^{(1)}_{\mu \nu}=S_{\mu }^{\;\;\lambda \rho }T_{\nu \lambda \rho}\,,\,\,\,F^{(2)}_{\mu \nu}=S_{\lambda \mu }^{\;\;\;\rho }T_{\,\,\,\,\,\nu \rho }^{\lambda
}\,,\,\,\,F^{(3)}_{\mu \nu}=\,g_{\mu \nu }\,T\,. \label{tensorFies}
\end{equation}%
Note that the metric $g_{\mu\nu}=\eta _{ab}\,e^{a}_{\mu}e^{b}_{\nu}$ is actually a subsidiary field in the action (\ref{acciondetelectro}), being the actual dynamical field, the vielbein $e^{a}$.

The dimensionless constants $\alpha ,\beta ,\gamma$ in (\ref{tensorF}), span a 2-parametric family of smooth deformations of GR. This is so because $\alpha+\beta +D\, \gamma =1$ in order to get $Tr(F_{\mu\nu})=T$. This last condition constitutes the low field limit of the theory, where the BI parameter $\lambda$ goes to infinity. In this case, the action (\ref{acciondetelectro}) recovers the teleparallel action (\ref{acciontele}) with cosmological constant $2\Lambda=\lambda(1-\Delta)$, and then, the usual general relativistic description of the gravitational field given by Einstein field equations in its absolute parallelism form\footnote{In the following analysis, we shall fix $\Delta=1$}. In regard of this, note that the left hand side of (\ref{eq: ETRG}) is just one half of the Einstein tensor with cosmological constant term $\textbf{G}^{\nu}_{a}(\Lambda)/2$, which can be written in terms of $F^{(2)}_{\mu \nu}$ defined in (\ref{tensorFies}) as
\begin{equation}
 \frac{1}{2}\textbf{G}^{\nu}_{a}(\Lambda)=|e|^{-1}\partial_{\mu}(|e|S_{a}^{ ~ \mu\nu})-F^{(2)\nu}_{ ~ ~ ~ ~ a} + \frac{1}{4}(T - 2\Lambda) e^{\nu}_{a}.
 \label{eq:ETRGbajaF}
 \end{equation}

As was discussed several times in the literature (see, e.g., \cite{Nos1},\cite{Nos2}), the action (\ref{acciondetelectro}) is not local Lorentz invariant. This is so because under the action of an element $\Delta^{a'}_{b}$ of the Lorentz group $SO(1,D-1)$, the torsion 2-form transforms as $T^{a}\rightarrow T^{a'}=\Delta^{a'}_{b}T^{b}-e^{b}\wedge d\Delta^{a'}_{b}$, and then $F_{\mu\nu}$ is invariant just under an on shell \emph{remnant} subgroup of $SO(1,D-1)$. This restricted invariance was discovered very recently in the context of $f(T)$ gravity \cite{Nos4}, and it is expected that analogous results might emerge in BI gravity by virtue of the similarities arising in the underlying geometrical structure, even though the associated Lorentz subgroup corresponding to a given spacetime can be very different in the two theories. Nevertheless, the Lorentz symmetry is protected in the low energy limit corresponding to GR, because the Weitzenb\"{o}ck invariant transforms under $SO(1,D-1)$ as $T\rightarrow T'=T+$ surface term, leaving invariant, the dynamics coming from (\ref{acciontele}).

The appearance of equivalence classes of preferred frames $[e^{a}]$ in the strong field regime, defined by frames connected through Lorentz transformations belonging to the remnant group of a given spacetime in BI gravity, constitutes in our opinion an interesting feature which may be telling us something very important about the structure of the gravitational field in such a regime. The full characterization of the remnant group of BI gravity along the lines followed in \cite{Nos4} for the $f(T)$ case, remains as an open problem. For details regarding the emergence of preferred frames in theories with absolute parallelism, the reader is invited to consult Ref. \cite{Nos5}.

\section{The field equations}
It is our aim now to derive the equations of motion coming from the action (\ref{acciondetelectro}). For this purpose, it results useful to consider the expression for the determinant $\vert U\vert$ of a given matrix $U$ in terms of the cofactors, namely (no summation in $\mu$)
\begin{equation}
\vert U\vert=  \sum\limits_{\nu} U_{\mu\nu} \vert U(\mu|\nu)\vert,
\end{equation}
where $\vert U(\mu|\nu)\vert$ is the determinant of the matrix $U(\mu|\nu)$ obtained from $U$ by removing the row $\mu$ and the column $\nu$, multiplied by $(-1)^{\mu+\nu}$. If this determinant is non zero, the components of the inverse matrix $U^{-1}$ can be written in terms of the cofactors as
\begin{equation}
(U^{-1})^{\nu\mu} = \vert U(\mu|\nu)\vert/\vert U\vert.
\label{eq:inversacofactor}
\end{equation}
It is clear then that $\vert U(\mu|\nu)\vert $ is independent of the component $U_{\mu\nu}$, because it was computed removing the entire row $\mu$. Then, $ \partial \vert U(\mu|\nu)\vert/\partial U_{\mu\nu}=0$, and we have $\partial \vert U\vert/\partial U_{\mu\nu}=\vert U(\mu|\nu)\vert$. In this way, by means of equation (\ref{eq:inversacofactor}), we get
\begin{equation}
\partial \vert U\vert/\partial U_{\mu\nu} = (U^{-1})^{\nu\mu}\,\vert U\vert,
\label{eq:derivadadet}
\end{equation}
which will be the starting point of the construction. The key component of the action is the tensor
\begin{equation}
U_{ \mu \nu} = g_{\mu \nu} + 2\lambda^{-1} F_{\mu \nu},
\end{equation}
where $F\equiv F_{\mu \nu}$ is defined in (\ref{tensorF}). Let us proceed now to calculate the Euler-Lagrange equations for the Lagrangian
\begin{equation}\label{lagresta}
\mathcal{L} =\sqrt{|U|}-\sqrt{|g|},
\end{equation}
taking into account that $\mathcal{L}$ must be varied with respect to the vielbein components $e^a_{\lambda}$, i.e,
\begin{equation}\label{Eulagresta}
\delta \mathcal{L} = \left[ \frac{\partial \mathcal{L}}{\partial e^{a}_{\lambda} } - \partial_{\gamma} \left( \frac{\partial \mathcal{L}}{\partial (\partial_\gamma e^{a}_{\lambda})} \right) \right] \delta e^a_{\lambda}.
\end{equation}
Now, the variation of the first term in (\ref{lagresta}) involves
\begin{equation}
2\delta \sqrt{ \vert U \vert } = \vert U \vert^{-\frac{1}{2}}  \delta |U| =  \vert U \vert^{-\frac{1}{2}} \partial \vert U \vert / \partial U_{\mu \nu}\, \delta U_{\mu \nu}.
\end{equation}
Using equation (\ref{eq:derivadadet}), we obtain
\begin{equation}
2\delta \sqrt{ \vert U \vert } =  \vert U \vert^{\frac{1}{2}} (U^{-1})^{\nu\mu} \delta U_{\mu \nu}.
\end{equation}
In addition, the variation of $U_{\mu \nu}$ with respect to $e^a_{\lambda}$ is given by
\begin{equation}
\delta U_{\mu \nu} =  \left[ \frac{\partial U_{\mu \nu}}{\partial e^{a}_{\lambda} } - \partial_{\gamma} \left( \frac{\partial U_{\mu \nu}}{\partial (\partial_\gamma e^{a}_{\lambda})} \right) \right] \delta e^a_{\lambda}.
\end{equation}
In the first term of this last expression we have
\begin{equation}
\frac{\partial U_{\mu \nu} }{\partial e^{a}_{\lambda}}  = \frac{\partial g_{\mu \nu}}{\partial e^{a}_{\lambda}} + \frac{2}{\lambda}  \frac{\partial F_{\mu \nu}}{\partial e^{a}_{\lambda}}, \,\,\,\,\,\, \frac{\partial g_{\mu \nu}}{\partial e^{a}_{\lambda}} = \delta^{\lambda}_{ ( \mu}e_{a\, \nu )}.
\end{equation}
In this way, the term $\partial \mathcal{L}/\partial e^{a}_{ \lambda}$ in (\ref{Eulagresta}) becomes
\begin{equation}
\frac{\partial \mathcal{L}}{ \partial e^{a}_{ \lambda} }  = \frac{|U|^{\frac{1}{2}}}{2} (U^{-1})^{\nu\mu}\left[\delta^{\lambda}_{ ( \mu}e_{a\, \nu )} + \frac{2}{\lambda}\frac{ \partial F_{ \mu \nu} }{ \partial e^{a}_{ \lambda}} \right] -   e^{\lambda}_{a} |e|,
\label{eq:Eulag1}
\end{equation}
where we have used that $\partial|e|/\partial e^{a}_{\lambda}=e^{\lambda}_{ a} |e|$.

In order to compute the second term of (\ref{Eulagresta}) we should observe that $\partial U_{\mu \nu}/\partial (\partial_\gamma e^{a}_{\lambda}) =\partial F_{\mu \nu}/\partial (\partial_\gamma e^{a}_{\lambda})$, because the metric do not depends on derivatives of the vielbein field. Then, on similar grounds we obtain
\begin{equation}
\partial \mathcal{L} /\partial (\partial_\gamma e^{a}_{ \lambda}) =  \lambda^{-1}|U|^{\frac{1}{2}} (U^{-1})^{\nu \mu}\,  \partial F_{\mu \nu}/ \partial(\partial_\gamma e^a_{ \lambda}),
\label{eq:Eulagder}
\end{equation}
so the equations of motion read\footnote{Here we have supposed that the energy-momentum tensor of spinless matter couples to the metric $g=e^{a}e^{b}\eta_{ab}$ in the usual way.}
\begin{eqnarray}
 \frac{|U|^{\frac{1}{2}}}{2} (U^{-1})^{\nu\mu}\left[\delta^{\lambda}_{ ( \mu}e_{a\, \nu )} + \frac{2}{\lambda}\frac{ \partial F_{ \mu \nu} }{ \partial e^{a}_{ \lambda}} \right] -   e^{\lambda}_{ a} |e|-  \nonumber\\
 \lambda^{-1}\partial_\gamma\Big(|U|^{\frac{1}{2}} (U^{-1})^{\nu \mu}\frac{\partial F_{\mu \nu} }{\partial(\partial_\gamma e^a_{ \lambda})}\Big) = 16 \pi G  |e|  \mathbf{T}^{\lambda}_{~  a}  .
\label{varfin}
\end{eqnarray}
However, a complete characterization of the equations of motion involves the specification of $\partial F_{ \mu \nu}/\partial e^{a}_{ ~ \lambda}$ and $\partial F_{\mu \nu}/\partial(\partial_\gamma e^a_{ ~ \lambda})$. For this purpose, let us factorize $F_{\mu \nu}$ as following. We define the tensor $D^{b ~ ~ ~ ~ \sigma \rho}_{~ \alpha \beta c}$ in such a way that $S^b_{ ~ \alpha \beta} = D^{b ~ ~ ~ ~ \sigma \rho}_{~ \alpha \beta c} T^{c}_{~ \sigma \rho}$. Its explicit form then reads
\begin{equation}\label{tensorD}
D^{b ~ ~ ~ ~ \sigma \rho}_{~ \alpha \beta c} = \frac{1}{4} \left( \delta^b_c \delta^\sigma_\alpha \delta^{\rho}_{\beta} - e^{\rho b}  e_{c [\alpha} \delta^{\sigma}_{\beta]} \right) - \frac{1}{2} e^b_{[ \alpha} \delta^{\rho}_{\beta ]} e^{\sigma}_{c}.
\end{equation}
The advantage of this factorization is that the components of the 2-form $S^b_{ ~ \alpha \beta}$ are written as the product of a tensor which depends only on the vielbein, times a tensor which depends solely on the derivatives of it. In a similar way, we can also define $Q^{\lambda ~ b}_{ ~ a ~ \alpha \beta}$, which verifies
\begin{equation}
Q^{\lambda ~ b}_{ ~ a ~ \alpha \beta} := \frac{ \partial S^{b}_{ ~ \alpha \beta} }{ \partial e^{a}_{\lambda} } =  \frac{ \partial \left( D^{b ~ ~ ~ ~ \sigma \rho}_{~ \alpha \beta c}  \right) }{ \partial e^{a}_{\lambda} }  T^{c}_{\sigma \rho},
\end{equation}
and then is explicitly expressed as
\begin{eqnarray}\label{tensorQ}
 Q^{\lambda ~ b}_{ ~ a ~ \alpha \beta} &=& \frac{1}{4} \left( T_{ [ \alpha, \beta] a} e^ { \lambda b} - \delta^{\lambda}_{[\alpha } T_{a \beta] }^{ ~ ~ ~ b} \right)- \nonumber \\&&\frac{1}{2} \left( \delta^b_a \delta^ { \lambda}_ {[\alpha } T^{\sigma}_{  ~ \sigma \beta]} - e^{b}_{[\alpha} T^{\lambda}_{ ~ a\beta]} \right).
\end{eqnarray}

With the tensors $D^{b ~ ~ ~ ~ \sigma \rho}_{~ \alpha \beta c}$ and $Q^{\lambda ~ b}_{ ~ a ~ \alpha \beta}$ so introduced, the rest of the calculations are just a matter of laborious work. At the end, we obtain for $\partial F_{ \mu \nu}/\partial e^{a}_{ ~ \lambda}$ and $\partial F_{\mu \nu}/\partial(\partial_\gamma e^a_{ ~ \lambda})$ the following expressions:

\begin{eqnarray}
&&\frac{\partial F_{ \mu \nu}}{\partial e^{a}_{ ~ \lambda}}=\alpha \Big(\delta^{\lambda}_{\mu} F^{^{(1)}}_{a \nu} + \delta^{\lambda}_{\nu} F^{^{(1)}}_{ \mu a} + Q^{\lambda}_{~ a \mu \alpha \beta} T_{\nu}^{~ \alpha \beta} - \notag\\
&&2 S_{ \mu \rho (a} T_{\nu}^{~ \rho \lambda)}\Big) +\beta \Big( Q^{\lambda}_{ ~ a \beta \mu \alpha} T^{\beta ~ \alpha}_{ ~ \nu} - S_{\beta \mu}^{ ~ ~ (\lambda} T^{ \beta}_{ ~ \nu a)}\Big) +\notag\\
&&\gamma \Big( \delta^{\lambda}_{(\mu} e_{a \nu)} T - 4 g_{\mu \nu} F^{^{(2)} \lambda}_{ ~ ~ ~ ~ a} \Big),
\label{deftet}
\end{eqnarray}

\begin{eqnarray}
&&\frac{\partial F_{\mu \nu}}{\partial(\partial_\gamma e^a_{ ~ \lambda})}=\alpha \Big(2 S_{\mu}^{~ \lambda \gamma} e_{ \nu a}+ D_{\mu ~ ~ a}^{~ \sigma \rho ~ [\lambda \gamma] } T_{\nu \sigma \rho}\Big)+\notag\\
&&\beta\Big(D_{ \beta \mu \alpha a}^{~ ~ ~ ~ [\lambda \gamma] } T^{\beta ~ \alpha}_{~ \nu}  + S_{a \mu}^{ ~ ~ [\gamma } \delta^{ \lambda]}_{\nu}\Big) +  4\gamma\, g_{\mu \nu} S_{a}^{ ~ \lambda \gamma}.
\label{defdertet}
\end{eqnarray}
We conclude this section by mentioning that, as in other general-relativistic field theories \cite{Hehl4}, automatic conservation of energy-momentum is not guaranteed in BI gravity through the equations of motion (\ref{varfin}). This can be seen easily noting that, for the case $\alpha=\beta=0$ (i.e. $\gamma=D^{-1}$) in eq. (\ref{tensorF}), BI determinantal gravity reduces to a particular $f(T)$ theory, where energy-momentum conservation is not automatic (see, for instance, Ref. \cite{Christian} regarding this point).

\section{Some implications of the field equations}

One of the first questions we intended to answer, is under what circumstances it is possible to reobtain GR's solutions from the Born-Infeld scheme under consideration. Of course, the limit $\lambda\rightarrow\infty$ assures the low energy regime provided by GR, and any solution of BI gravity will be a solution of Einstein's theory in this limit. Yet, it results crucial to develop certain criteria in order to know when a GR solution will solve the full BI determinantal field equations instead. Bearing this in mind, we can actually prove the following

\begin{proposicion}
Let $g$ be a solution of Einstein's equations with cosmological constant $\Lambda$ and energy-momentum tensor $ \mathbf{T}^{\mu \nu} $. Let be $e^a$ a vielbein such that:

\begin{itemize}
\item[(i)] It generates the metric, that is: $g = e^a e^b \eta_{ab}$
\item[(ii)] It satisfies that $ F_{ \mu \nu} \left( e^a \right)  =  C g_{\mu \nu}$ being $C\neq-\lambda/2$, a constant.
\end{itemize}

If we choose $C$ such that

\begin{equation}\label{constcosmo}
2 \widetilde{\Lambda} = \widetilde{C} (D-2) -1 + (1 + 2\widetilde{C})^{1-\frac{D}{2}},
\end{equation}

where $\widetilde{\Lambda}=\Lambda/\lambda$ and $\widetilde{C}=C/\lambda$, then $e^a$ is a solution of the determinantal equations with energy-momentum tensor $ \mathbf{T}^{' \mu \nu} = \left( 1 + 2 \widetilde{C}\right)^{\frac{D}{2}-1}   \mathbf{T}^{\mu \nu} $.
\label{prop3}
\end{proposicion}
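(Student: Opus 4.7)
The strategy is to substitute hypothesis (ii) directly into (\ref{varfin}) and collapse it, via the teleparallel equivalent of Einstein's equations for $g$, into a single algebraic constraint plus the advertised matter rescaling. The assumption $F_{\mu\nu}=Cg_{\mu\nu}$ makes $U_{\mu\nu}=(1+2\widetilde{C})\,g_{\mu\nu}$ a constant multiple of the metric, so that
\begin{equation*}
|U|^{1/2}(U^{-1})^{\nu\mu}=(1+2\widetilde{C})^{D/2-1}\,|e|\,g^{\nu\mu}.
\end{equation*}
Because $C$ is constant, this scalar prefactor commutes with the divergence $\partial_\gamma$ in the last bracket of (\ref{varfin}) and can be extracted from every term.

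The next step is to evaluate the contractions $g^{\nu\mu}\partial F_{\mu\nu}/\partial e^{a}_{\lambda}$ and $g^{\nu\mu}\partial F_{\mu\nu}/\partial(\partial_\gamma e^{a}_{\lambda})$. Here I would exploit the normalization identity $g^{\mu\nu}F_{\mu\nu}=T$ (the very reason for imposing $\alpha+\beta+D\gamma=1$) and differentiate it: with $\partial g_{\mu\nu}/\partial e^{a}_{\lambda}=\delta^{\lambda}_{(\mu}e_{a\nu)}$ and $F=Cg$, the first contraction becomes $\partial T/\partial e^{a}_{\lambda}+2C\,e^{\lambda}_{a}$, and the second has no metric contribution and reduces to $\partial T/\partial(\partial_\gamma e^{a}_{\lambda})$. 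Tracing (ii) with $g^{\mu\nu}$ also yields the on-shell identity $T=DC$, which I would reserve for the cosmological-constant manipulation below.

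Assembling these pieces, the derivative block of (\ref{varfin}) reorganizes into the teleparallel Euler--Lagrange derivative $\delta(|e|T)/\delta e^{a}_{\lambda}$ modulo explicit $e^{\lambda}_{a}|e|$ corrections coming from the first bracket of (\ref{varfin}), from the subtracted $\sqrt{|g|}$ variation, and from the $T\,|e|\,e^{\lambda}_{a}$ remainder of the Leibniz rule combined with $T=DC$. At this point I invoke hypothesis (i) and the fact that $g$ satisfies Einstein's equations with $\Lambda$ and $\mathbf{T}^{\mu\nu}$: in the teleparallel form (\ref{eq: ETRG})--(\ref{eq:ETRGbajaF}), this lets me trade $\delta(|e|T)/\delta e^{a}_{\lambda}$ for a matter piece proportional to $\mathbf{T}^{\lambda}_{~a}$ plus a cosmological piece proportional to $\Lambda e^{\lambda}_{a}|e|$, with prefactors fixed by the normalization of the BI action. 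After this substitution the equation splits cleanly: the matter channel automatically carries the prefactor $(1+2\widetilde{C})^{D/2-1}$, yielding $\mathbf{T}'^{\lambda}_{~a}=(1+2\widetilde{C})^{D/2-1}\mathbf{T}^{\lambda}_{~a}$, while the residual coefficient of $e^{\lambda}_{a}|e|$ turns out to be $(1+2\widetilde{C})^{D/2-1}[1+\widetilde{C}(2-D)+2\widetilde{\Lambda}]-1$. Setting this residue to zero and solving for $\widetilde{\Lambda}$ reproduces (\ref{constcosmo}) after trivial rearrangement.

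The main obstacle, in my view, is not the algebra itself but keeping three competing scalar pieces aligned: the conformal factor $(1+2\widetilde{C})^{D/2-1}$ from the determinantal prefactor, the trace correction $\widetilde{C}(2-D)$ surviving from $T=DC$, and the cosmological term $2\widetilde{\Lambda}$ from Einstein's equations must combine with the stand-alone $-|e|e^{\lambda}_{a}$ of the $\sqrt{|g|}$ variation to produce a single closed equation for $\widetilde{\Lambda}$ in terms of $\widetilde{C}$ and $D$, with no leakage of $e^{\lambda}_{a}|e|$ pieces into the matter channel.
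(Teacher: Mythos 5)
Your proposal is correct and reaches the same final condition as the paper --- indeed your residue $(1+2\widetilde{C})^{D/2-1}\bigl[1+\widetilde{C}(2-D)+2\widetilde{\Lambda}\bigr]-1=0$ rearranges exactly to (\ref{constcosmo}), and the matter rescaling comes out with the right power. The global architecture is the one the paper follows: substitute (ii) to get $U_{\mu\nu}=(1+2\widetilde{C})g_{\mu\nu}$, pull the constant conformal factor $(1+2\widetilde{C})^{D/2-1}$ through every term including the divergence, reduce the remainder to the teleparallel form of the Einstein tensor, invoke the hypothesis that $g$ solves Einstein's equations, and kill the leftover $e^{\lambda}_{a}|e|$ coefficient to fix $\widetilde{\Lambda}$. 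Where you genuinely diverge is in the central computation. The paper evaluates $g^{\nu\mu}\,\partial F_{\mu\nu}/\partial e^{a}_{\lambda}$ and $g^{\nu\mu}\,\partial F_{\mu\nu}/\partial(\partial_{\gamma}e^{a}_{\lambda})$ by contracting the explicit expressions (\ref{deftet}) and (\ref{defdertet}) term by term (``after a considerable amount of work''), landing on $2Ce^{\lambda}_{a}-4F^{(2)\lambda}_{~~~~a}$ and $4S_{a}^{~\lambda\gamma}$ respectively. You instead differentiate the normalization identity $g^{\mu\nu}F_{\mu\nu}=T$ (valid precisely because $\alpha+\beta+D\gamma=1$) and use $F=Cg$ only to evaluate the inverse-metric variation, obtaining $\partial T/\partial e^{a}_{\lambda}+2Ce^{\lambda}_{a}$ and $\partial T/\partial(\partial_{\gamma}e^{a}_{\lambda})$; the Euler--Lagrange combination of these is then traded for $\mathbf{G}^{\lambda}_{~a}$ via the teleparallel equivalence (\ref{eq: ETRG})--(\ref{eq:ETRGbajaF}), with the Leibniz remainder handled through $T=DC$. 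This buys you a proof that never touches the explicit formulas (\ref{deftet})--(\ref{defdertet}), at the price of relying on the teleparallel identity $\delta(|e|T)/\delta e^{a}_{\lambda}\propto|e|\mathbf{G}^{\lambda}_{~a}$ as a black box; the paper's route is heavier but produces the intermediate tensorial identities explicitly, which it then reuses in Proposition~\ref{prop4}. Both are sound.
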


\emph{Remark 1:} Note that if a given vielbein $e^a$ satisfies the hypothesis of the proposition with $C=0$, then $e^a$ is a solution of both theories with $\Lambda=0$ and the same matter content encoded in $\mathbf{T}^{\mu \nu}$.

\emph{Remark 2:} In a way, the physical interpretation of this proposition can be the following: a vielbein satisfying the condition $(ii)$ with $C\neq0$, \emph{generates} a cosmological constant. The value of $\widetilde{\Lambda}$ so obtained is a consequence of the specific choice of $e^a$, via the constant $C$ appearing in $(ii)$. In other words, a given GR solution can always be obtained in the context of BI gravity if an appropriated frame $e^a$ can be chosen in which $(ii)$ holds, with $C$ satisfying (\ref{constcosmo}).

\begin{proof}
If $F_{\mu \nu} = C g_{ \mu \nu}$, by definition we have $U_{\mu \nu} = \left( 1 + 2\widetilde{C} \right) g_{\mu \nu} $, which implies that $ \vert U \vert^{\frac{1}{2}} = \left( 1 + 2\widetilde{C} \right)^{ \frac{D}{2} }|e| $ and $(U^{-1})^{\nu\mu} =  g^{\nu \mu}/ \left( 1 + 2\widetilde{C} \right)$ ($(U^{-1})^{\nu\mu}$ is well defined because $C\neq-\lambda/2$). Under these conditions, (\ref{eq:Eulag1}) becomes
\begin{eqnarray}
\frac{ \partial \mathcal{ L} }{ \partial e^{a}_{\lambda} } &=&  |e|   \left[ \left( 1 + 2\widetilde{C} \right)^{\frac{D}{2} -1} - 1 \right]  e^{\lambda}_a   +\nonumber\\
&&\lambda^{-1}  |e|\left( 1 + 2\widetilde{C} \right)^{\frac{D}{2}-1} g^{\nu \mu}  \frac{ \partial F_{ \mu \nu}  }{ \partial e^{a}_{\lambda} }.
\end{eqnarray}
On the other hand, after contracting eq. (\ref{deftet}) with $g^{ \nu \mu}$, we obtain in the present circumstances,
\begin{equation}
g^{ \nu \mu} \frac{ \partial F_{ \mu \nu}  }{ \partial e^{a}_{\lambda} } = 2C e^{\lambda}_{a} - 4 F^{^{(2)} \lambda}_{ ~ ~ ~ ~ a},
\end{equation}
and then
\begin{equation}\label{par1}
\frac{ \partial  \mathcal{L} }{ \partial e^{a}_{\lambda} } = - 4 |e|  \left(  1 + 2 \widetilde{C} \right)^{ \frac{D}{2} -1 } \left[ F^{^{(2)} \lambda}_{ ~ ~ ~ ~ a} - \frac{1}{4} \left( T - 2\Lambda \right) e^{\lambda}_{a} \right],
\end{equation}
where we have used (\ref{constcosmo}), and that the trace of $F_{\mu\nu}$ is, due to $(ii)$, $T= Tr(F)= C D$.

In a similar manner, after a considerable amount of work we can write the equation (\ref{eq:Eulagder}) as
 \begin{eqnarray}\label{par2}
\frac{ \partial \mathcal{L} }{ \partial (\partial_{ \gamma} e_{\lambda}^a) } &=& \left(1 + 2\widetilde{C} \right)^{ \frac{D}{2} -1} |e| \, g^{\nu \mu} \frac{ \partial F_{\mu \nu}}{\partial (\partial_{\gamma} e_{\lambda}^a)}\nonumber\\
  &=& 4 \left(1 + 2\widetilde{C} \right)^{ \frac{D}{2} -1 } |e| S_{a}^{ ~ \lambda \gamma},
\end{eqnarray}
where we have used in the last step the expression for $g^{\nu \mu} \partial F_{\mu \nu}/\partial (\partial_{ \gamma} e_{\lambda}^a)$ provided by the contraction of $g^{\nu \mu}$ with eq. (\ref{defdertet}). With (\ref{par1}) and (\ref{par2}) at hand, we can finally write down the variation of the gravitational Lagrangian involved in the field equation (\ref{varfin}), according to
\begin{eqnarray}
&&\frac{\delta\mathcal{L}}{\delta e_{\lambda}^a}=-4 \left(  1 + 2\widetilde{C} \right)^{ \frac{D}{2} -1 }\, \times\, \nonumber \\
&&\left[ |e|  \left( F^{^{(2)} \lambda}_{ ~ ~ ~ ~ a} - \frac{1}{4} \left( T - 2\Lambda \right) e^{\lambda}_{a} \right) + \partial_{\gamma} \left( |e| S_{a}^{ ~ \lambda \gamma}\right)\right]
 \label{eq:ETRGconctecosmologica}
 \end{eqnarray}

The term in the brackets in (\ref{eq:ETRGconctecosmologica}) corresponds to the Einstein equations with cosmological constant, see eqn. (\ref{eq:ETRGbajaF}). Therefore, we can write this term as $-|e| \, \textbf{G}^{\lambda}_{~ a}(\Lambda)/2$. Due to the fact that $g_{\mu \nu}$ is a solution of Einstein equations with energy-momentum tensor $\mathbf{T}^{ \mu \nu}$, then we have

\begin{eqnarray}
 2 \left(  1 + 2\widetilde{C} \right)^{ \frac{D}{2} -1 } |e|  \textbf{G}^{\lambda}_{ ~ a} (\Lambda)&=& 16\pi G \left( 1 + 2\widetilde{C} \right)^{ \frac{D}{2} -1 } |e| \mathbf{T}^{\lambda}_{~ a}\nonumber \\
 &=& 16\pi G  |e| \mathbf{T'}^{\lambda}_{~ a},
\end{eqnarray}
which establishes the desired result.\end{proof}

\bigskip

By virtue of $(ii)$, the vielbein in the proposition above leads to a constant Lagrangian density given by $(1 + 2\widetilde{C})^{D/2}-1$. It results interesting to figure out under what circumstances (other than $C=0$, a situation contemplated in \emph{Remark 1} above), the determinantal action becomes null. In general, this condition demands $ \vert g_{\mu \nu} + \frac{2}{\lambda} F_{\mu \nu} \vert ^{1/2} = \vert g_{\mu \nu} \vert ^{1/2}$, which will be fulfilled if $F_{\mu \nu}$ is nilpotent of grade $K$, this is, if $F^{K}=0$ and $F^{K-1}\neq0$. This property allows us to prove the following
\begin{proposicion}
Let ${e^a}$ be a vielbein field that makes $F_{\mu \nu}$ nilpotent of grade $K$. Then, the field equations (\ref{varfin}) can be written as
$ \textbf{G}^\lambda_a = 8\pi G (\mathbf{T}^{\lambda} _{ ~ a} + T^{ (\textbf{g}) \lambda}_{~ ~ ~ ~ a})$,
where $ \textbf{G}^\lambda_a $ is the Einstein tensor for the metric $g=e^{a}e^{b}\eta_{ab}$, $\mathbf{T}^{\lambda} _{ ~ a}$ is the energy-momentum tensor of the matter fields, and $ T^{ (\textbf{g}) \lambda}_{~ ~ ~ ~ a}$ can be interpreted as a geometric energy-momentum tensor coming from the Lagrangian

\begin{equation}\label{lagrangeomet}
\mathcal{L}^{(\textbf{g})} = \sum\limits_{n=1}^{K-1} \left( -\frac{2}{\lambda} \right)^n \frac{|e|}{n} Tr(F^{n}).
\end{equation}
\label{prop4}
\end{proposicion}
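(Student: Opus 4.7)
My plan is to exploit nilpotency to collapse the determinantal structure of $\sqrt{|U|}$ into a finite polynomial in $F$, and then identify the resulting field equations with those of an effective Lagrangian of teleparallel (Einstein--Hilbert) type plus a geometric correction. \textbf{Step 1 (the determinant trivialises):} Factor $U = g(I + 2\lambda^{-1}M)$, where $M^{\mu}{}_{\nu} = g^{\mu\rho}F_{\rho\nu}$ is the mixed-index form of $F$. Since $F$ (and hence $M$) is nilpotent of grade $K$, every eigenvalue of $I + 2\lambda^{-1}M$ equals unity. Hence $\det(I + 2\lambda^{-1}M) \equiv 1$, $|U| = |g|$, the inverse admits the terminating Neumann expansion $(I + 2\lambda^{-1}M)^{-1} = \sum_{k=0}^{K-1}(-2/\lambda)^{k}M^{k}$, and the BI Lagrangian $\mathcal{L} = \sqrt{|U|} - \sqrt{|g|}$ vanishes identically as a spacetime field on the configuration.

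\textbf{Step 2 (the EL variation reduces to its linear piece).} Apply the trace--log identity $\log\det(I+X) = \mathrm{Tr}\log(I+X) = \sum_{n\geq 1}(-1)^{n+1}\mathrm{Tr}(X^{n})/n$ with $X = 2\lambda^{-1}M$ to write $\sqrt{|U|} = \sqrt{|g|}\,e^{\Phi}$, with $\Phi := \tfrac{1}{2}\sum_{n\geq 1}\frac{(-1)^{n+1}(2/\lambda)^{n}}{n}\mathrm{Tr}(F^{n})$. Because all power traces of a nilpotent matrix vanish, $\Phi \equiv 0$ as a spacetime function on the configuration, so $e^{\Phi} = 1$ and $\partial_{\gamma}e^{\Phi} = e^{\Phi}\partial_{\gamma}\Phi = 0$. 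Substituting into the full Euler--Lagrange variation of $\sqrt{|g|}(e^{\Phi}-1)$ kills every term carrying an explicit factor of $(e^{\Phi}-1)$ or $\partial_{\gamma}e^{\Phi}$, so the EL expression reduces, at the configuration, to that of the linear piece $\sqrt{|g|}\Phi$.

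\textbf{Step 3 (truncate and identify).} The variation of $\sqrt{|g|}\Phi$ truncates further: since $\delta\mathrm{Tr}(F^{n}) = n\,\mathrm{Tr}(F^{n-1}\delta F)$ and $F^{n-1} = 0$ for $n-1 \geq K$, only the terms $n = 1,\ldots,K-1$ survive, reproducing (up to the overall factor $-1/2$) the Lagrangian $\mathcal{L}^{(\mathbf{g})}$ of the proposition. The $n = 1$ contribution is proportional to the teleparallel Einstein--Hilbert Lagrangian $|e|T$, whose EL variation produces, via identity (\ref{eq:ETRGbajaF}), the teleparallel expression for $\mathbf{G}^{\lambda}{}_{a}/2$. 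The remaining $n \geq 2$ terms, varied in the standard way, supply the geometric stress tensor $T^{(\mathbf{g})\lambda}{}_{a}$. Inserting these into the field equations (\ref{varfin}) and transposing the geometric part to the right-hand side yields $\mathbf{G}^{\lambda}{}_{a} = 8\pi G(\mathbf{T}^{\lambda}{}_{a} + T^{(\mathbf{g})\lambda}{}_{a})$, as claimed.

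\textbf{Main obstacle.} The delicate step is handling the would-be $n=K$ term: $\mathrm{Tr}(F^{K}) = 0$ on the configuration, but its variation $K\,\mathrm{Tr}(F^{K-1}\delta F)$ is generically non-zero and would naively add a piece beyond $\mathcal{L}^{(\mathbf{g})}$. I expect this contribution either to be absorbed into the Einstein tensor via the teleparallel identity $R = -T + 2|e|^{-1}(|e|T^{\mu\rho}{}_{\mu})_{,\rho}$ --- i.e.\ to collapse to a boundary term --- or to cancel once the explicit forms (\ref{deftet})--(\ref{defdertet}) of $\partial F_{\mu\nu}/\partial e^{a}_{\lambda}$ and $\partial F_{\mu\nu}/\partial(\partial_{\gamma}e^{a}_{\lambda})$ are combined with the cyclicity of the trace. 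Pinning this down rigorously is where most of the remaining work lies; the rest is combinatorial bookkeeping with the factors $(-2/\lambda)^{n}/n$.
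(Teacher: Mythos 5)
Your Steps 1 and 2 are sound and, modulo packaging, coincide with what the paper does: the paper first derives the general Euler--Lagrange expressions (\ref{eq:Eulag1})--(\ref{eq:Eulagder}) and then substitutes $|U|^{1/2}=|e|$ together with the terminating Neumann series (\ref{serieformalnilp}) for $(U^{-1})^{\nu\mu}$, which carries the same information as your factorization $\sqrt{|U|}=\sqrt{|g|}\,e^{\Phi}$ with $\Phi\equiv 0$ on the configuration; both routes share the formal caveat about expanding off the nilpotent locus, which the paper acknowledges below eq.~(\ref{serieformal}). Your identification of the $n=1$ trace with the teleparallel Einstein--Hilbert Lagrangian $|e|\,\mathrm{Tr}(F)=|e|T$ is also exactly how the paper extracts $\textbf{G}^{\lambda}_{a}$: the first parenthesis in eq.~(\ref{vardeltatot}) is precisely the Euler--Lagrange derivative of $|e|T$ once one uses that nilpotency forces $T=\mathrm{Tr}(F)=0$.

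The genuine gap is your ``main obstacle,'' and neither of the resolutions you conjecture is the correct one. The variation $K\,\mathrm{Tr}(F^{K-1}\delta F)$ of the would-be $n=K$ term does not collapse to a boundary term, nor does it cancel by cyclicity against the explicit forms (\ref{deftet})--(\ref{defdertet}); it survives and is simply part of the geometric energy--momentum tensor. You can read this off from eq.~(\ref{vardeltatot}): the top ($n=K-1$) bracket there is $(F^{K-1})^{\nu\mu}\,\delta F_{\mu\nu}/\delta e^{a}_{\lambda}-(F^{K})^{(\lambda}{}_{~a)}$, whose first piece is generically nonzero precisely because $F^{K-1}\neq 0$, and which is (up to normalization) the Euler--Lagrange derivative of $\tfrac{1}{K}|e|\,\mathrm{Tr}(F^{K})$. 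The key conceptual point is that $\mathcal{L}^{(\mathbf{g})}$ is a generating functional: every $\mathrm{Tr}(F^{n})$ vanishes identically on the configuration, but its variation does not, so the top trace must be retained rather than discarded. If you carry out your own reindexing carefully you will find that the bracket sum in (\ref{vardeltatot}) is generated, up to an overall constant, by $\sum_{n=2}^{K}(-2/\lambda)^{n}\tfrac{|e|}{n}\mathrm{Tr}(F^{n})$; the summation range printed in (\ref{lagrangeomet}) is shifted by one relative to this, which is probably what led you to expect the $n=K$ contribution to disappear. Since all the traces vanish on shell, that labelling issue does not affect the decomposition $\textbf{G}^{\lambda}_{a}=8\pi G(\mathbf{T}^{\lambda}_{~a}+T^{(\mathbf{g})\lambda}_{~~~~a})$, only the explicit form given for the generating Lagrangian. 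Once you accept that the top trace contributes to $T^{(\mathbf{g})}$ rather than vanishing, your argument closes with no further input.
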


\begin{proof}
The nilpotent character of $F_{\mu \nu}$ assures $ \vert U \vert ^{1/2} = |e|$. On the other hand we can express formally the inverse of $U^{\nu \mu}$ as
\begin{equation}
(U^{-1})^{\nu\mu} = g^{\nu \mu}+ \sum\limits_{n=1}^{\infty} \left( - \frac{2}{\lambda} \right)^n (F^{n})^{\nu \mu},
\label{serieformal}
\end{equation}
being $(F^{n})^{\nu \mu}=F^{\nu}_{\sigma1}F^{\sigma1}_{\sigma2}...F^{\sigma_{n-1}\mu}$ which, in general, will converge only if $F_{\mu\nu}$ is sufficiently small compared with $\lambda$. Irrespective of these formal issues, we have now
\begin{equation}
(U^{-1})^{\nu\mu} = g^{\mu \nu}+\sum\limits_{n=1}^{K-1} \left(-\frac{2}{\lambda} \right)^n (F^{n})^{\nu \mu},
\label{serieformalnilp}
\end{equation}
which is well defined given the $K$-nilpotency of $F_{\mu\nu}$. This expression allow us to write (\ref{eq:Eulag1}) and (\ref{eq:Eulagder}) in the form

\begin{eqnarray}
&&\lambda|e|^{-1}\frac{ \partial \mathcal{ L} }{ \partial e^{a}_{\lambda} } =  g^{\nu \mu} \frac{ \partial F_{\mu \nu} }{ \partial e^{a}_{\lambda} } - F^{(\lambda}_{ ~ ~ a)}   +\nonumber\\
 &&\sum\limits_{n=1}^{K-1} \left( - \frac{2}{\lambda} \right)^n \left( - (F^{n+1})^{(\lambda}_{  ~ ~ ~ a)}   +   (F^{n})^{ ~ \nu \mu} \frac{ \partial F_{\mu \nu} }{ \partial e^{a}_{\lambda} }\right),
\label{eq:prop4eq1}
\end{eqnarray}
\begin{eqnarray}
&&\lambda|e|^{-1}\frac{ \partial \mathcal{L} }{ \partial (\partial_{ \gamma} e_{\lambda}^a) } = g^{\nu \mu}  \frac{ \partial F_{\mu \nu} }{ \partial (\partial_{ \gamma} e_{\lambda}^a)} +  \nonumber\\
&&\sum\limits_{n=1}^{K-1} \left( - \frac{2}{\lambda} \right)^n (F^{n})^{\nu \mu}\frac{ \partial F_{\mu \nu} }{ \partial (\partial_{ \gamma} e_{\lambda}^a)  }.
\label{eq:prop4eq2}
\end{eqnarray}
With these two terms at hand we can express the variation $\delta\mathcal{L}/\delta e_{\lambda}^a$ as

\begin{eqnarray}
&&\lambda\frac{\delta\mathcal{L}}{\delta e_{\lambda}^a}=|e|\left( g^{\nu \mu}\frac{\delta F_{\mu\nu}}{\delta e_{\lambda}^a}-F^{(\lambda}_{ ~ ~ a)}\right)+\nonumber\\
&&|e|\sum\limits_{n=1}^{K-1} \left( - \frac{2}{\lambda} \right)^n \left[(F^{n})^{\nu\mu}\frac{\delta F_{\mu\nu}}{\delta e_{\lambda}^a}- (F^{n+1})^{(\lambda}_{  ~ ~ ~ a)}\right].
\label{vardeltatot}
\end{eqnarray}
As we can see, the terms in the parenthesis in the right hand side reduces to the Einstein equations (see (\ref{eq:ETRGconctecosmologica}) and the paragraph below it), therefore, it is possible to regroup these terms to obtain the Einstein tensor. In order to proceed, let us see how the terms in the square brackets arise from the variation of a new lagrangian $\mathcal{L}^{(\textbf{g})}$. In view of this, we analyze the variation of the following functional
\begin{equation}\label{vartrazaf}
\delta \left(Tr( F^{n})|e|\right) = Tr(F^{n}) \delta |e| + \delta \left(Tr(F^{n}) \right) |e|,
\end{equation}
the first term of it being null because nilpotent matrices are traceless. Let us calculate the variation of the trace, which we can write as
\begin{equation}
\delta \left( F^{\sigma_1}_{~ \sigma_2} F^{\sigma_2}_{~ \sigma_3} ... F^{\sigma_n}_{~ \sigma_1}    \right) = n F^{(n-1)\nu}_{ ~ ~ ~ ~ ~ ~ ~ \rho} ~ \delta( F^{\rho}_{~ \nu}).
\end{equation}
Considering $\delta( F^{\rho}_{~ \nu})$ as
\begin{equation}
\delta F^{\rho}_{~ \nu} = \delta \left( F_{\mu \nu} g^{ \mu \rho} \right)= g^{\mu \rho} \delta \left( F_{ \mu \nu} \right) + \delta g^{\mu \rho} F_{\mu \nu},
\end{equation}
and recalling that $ \partial g^{\mu \rho}/ \partial e^a_{\lambda} = - g^{\lambda (\mu} e^{\rho)}_{a}$ we finally get
\begin{equation}
 \delta \left(Tr(F^{n}) |e| \right) =  n\left[ - F^{n(\lambda}_{ ~ ~ ~ ~ ~ a)}  \delta e^a_{\lambda} + F^{(n-1) \nu \mu} \delta (F_{\mu \nu}) \right],
\end{equation}
which looks like the terms in brackets in (\ref{vardeltatot}). Inspired by this, we define the lagrangian (\ref{lagrangeomet}) and construct $T^{(\textbf{g})}$ as
\begin{equation}\label{tensemgeo}
T^{ (\textbf{g}) \lambda}_{ ~ ~ ~ ~ a} = \frac{-1}{8\pi G|e|} \left[\frac{ \partial \mathcal{ L}^{(\textbf{g})} }{ \partial e^{a}_{\lambda} } - \partial_\gamma \left( \frac{ \partial \mathcal{L}^{(\textbf{g})} }{ \partial (\partial_{ \gamma} e_{\lambda}^a) } \right) \right],
\end{equation}
for which the equations of motion coming from (\ref{vardeltatot}) result
\begin{equation}\label{formasug}
 \textbf{G}^\lambda_a =  8\pi G (\mathbf{T}^{\lambda} _{ ~ a} + T^{ (\textbf{g})\lambda}_{~ ~ ~ ~ a}).
\end{equation}\end{proof}

 Formally, a decomposition of the sort (\ref{formasug}) can always be obtained in the context of modified gravity, but it would result useless, unless an explicit expression for the energy-momentum tensor is displayed. The relevance of this proposition, thus, relies on the implications concerning the regularization properties underlying the determinantal theory. In fact, the expression (\ref{formasug}) shows that under these conditions the theory generates an effective energy-momentum tensor $T^{(\textbf{g}) \lambda}_{~ ~ ~ ~ a}$ whose nature comes from the determinantal structure itself. This contribution could lead to a violation of the necessary conditions for the validity of the singularity theorems, provided that it were able to generate a repulsive gravitational regime in the strong field limit.

\section{Spherical symmetry}

Now we proceed to further discuss some additional aspects concerning the field equations, and to obtain, by means of the propositions proved in the former section, certain important solutions of them. In order to do so, it is important to bear in mind the difficulties involved due to the fact that the gravitational action is not local Lorentz invariant, and that the field equations determine the full vielbein components, not merely those of the metric. Hereafter we shall set $D=4$.

The propositions stated before enable us to obtain spherically symmetric solutions of the equations of motion. This can be seen by asking what kind of frames reproduce the Schwarzschild geometry of GR. In order to give a definitive answer to this question let us consider the \emph{asymptotic frame}
\begin{equation}\label{asframe}
e^0 = A(\rho) dt\,\,\,\,\, e^{i} = B(\rho) dx_{i},
\end{equation}
associated to the isotropic line element
\begin{equation}
ds^2= - A(\rho)^2 dt^2 + B(\rho)^2 \delta^{ij}dx_{i}dx_{j}.
\label{eq: metricaisotropica}
\end{equation}
In the above equations, $x_{i}$ are cartesian coordinates and
\begin{equation}
A(\rho) = \frac{ 2 \rho - M}{ 2 \rho + M }, ~ ~ ~ ~ ~ ~ B(\rho)= \left( 1 + \frac{M}{2\rho} \right)^2,
\end{equation}
where $\rho$ is the isotropic radial coordinate related to the usual Schwarzschild radial coordinate $r$ according to $(r^2-2Mr)^{1/2} + r - M = 2\rho$. Even though the condition $T=0$ is just a necessary one in order to obtain the Schwarzschild geometry also in the BI framework under consideration (see Remark 1 above), it enables us to proceed in the same fashion as in Ref. \cite{Nos6} in the hope to fulfill the stronger condition $F_{\mu\nu}=0$ referred as in Proposition 1 (with $C=0$). Keeping this in mind, we can perform a $t-r$ boost to the asymptotic frame (\ref{asframe}). The so obtained $(\tilde{e}^0,\tilde{e}^{i})$ frame reads

\begin{eqnarray}
&&\tilde{e}^0=A(\rho)\gamma(\rho)dt-B(\rho)\Pi_{1}[x_{1}dx_{1}+x_{2}dx_{2}+x_{3}dx_{3}], \notag \\
&&\tilde{e}^{1}=-A(\rho)\Pi_{1}x_{1}dt+\notag \\
&&B(\rho)\Big[(1+\Pi_{2}x_{1}^2)dx_{1}+\Pi_{2}x_{1}x_{2}dx_{2}+\Pi_{2}x_{1}x_{3}dx_{3}\Big],\notag \\
&&\tilde{e}^{2}=-A(\rho)\Pi_{1}x_{2}dt+\notag\\
&&B(\rho)\Big[\Pi_{2}x_{1}x_{2}dx_{1}+(1+\Pi_{2}x_{2}^2)dx_{2}+\Pi_{2}x_{2}x_{3}dx_{3}\Big],\notag \\
&&\tilde{e}^{3}=-A(\rho)\Pi_{1}x_{3}dt+\notag \\
&&B(\rho)\Big[\Pi_{2}x_{1}x_{3}dx_{1}+\Pi_{2}x_{2}x_{3}dx_{2}+(1+\Pi_{2}x_{3}^2)dx_{3}\Big],\notag \\
\label{boosteadas}
\end{eqnarray}%

where $\Pi_{1}=\sqrt{\gamma^{2}(\rho)-1}/\rho$, $\Pi_{2}=(\gamma(\rho)-1)/\rho^2$, and the usual definitions for the Lorentz boost were adopted:
\begin{equation}
\gamma(\rho)=\Big(1-\beta^{2}(\rho)\Big)^{-\frac{1}{2}},\,\,\,\,\,
\beta(\rho)=v(\rho)/c. \label{defboost}
\end{equation}
After some standard calculations, the Weitzenb\"{o}ck invariant for the boosted frame (\ref{boosteadas}) can be obtained, resulting
\begin{equation}
T(\tilde{e}) = -\frac{64 \rho^2}{(M+2\rho)^4} \left(\frac{M^2 + 4\rho^2}{M^2 - 4 \rho^2}\,Z-\rho\, Z'+1\right),
\end{equation}
where $Z=Z(\rho)\doteq Cosh(\beta(\rho))$. It is straightforward to show that $T(\tilde{e})$ vanishes if
\begin{equation}\label{funcC}
Z(\rho)=-\frac{ M^2 + 4 \rho^2 + \chi \rho}{M^2-4\rho^2},
\end{equation}
where $\chi$ is an arbitrary integration constant which officiates as a boost generator. The divergent character of (\ref{funcC}) at the black hole horizon $\rho=M/2$, is just a consequence of the bad behavior of the isotropic chart there. Needless to say, this sort of pathologies can be easily circumvented by working in the maximal analytic extension provided by the Kruskal chart.

As mentioned before, the vanishing of $T(\tilde{e})$ does not implies the vanishing of $F_{\mu\nu}$, but the former condition significantly simplifies the form of the latter. Actually, the diagonal components of $F_{\mu\nu}$ coming from the frame (\ref{boosteadas}) with $Z(\rho)$ given by (\ref{funcC}), are (in isotropic coordinates $(t,\rho,\theta,\phi)$):

\begin{eqnarray}\label{comdiagf}
F_{tt}&=&-\frac{16 \rho^{4}(\chi+4M)(8 \alpha M+\chi\beta)}{(M+2\rho)^{8}}\nonumber\\
F_{\rho\rho}&=&-\frac{\rho\beta\,\chi^{2}+(2\alpha+\beta)(M^{2}+4\rho^{2})^{2}\chi+32M^{2} \alpha\rho}{\rho(M^{2}-4\rho^{2})^{2}}\nonumber\\
F_{\theta\theta}&=&\frac{\chi \rho\, (2 \alpha +\beta )}{2 (M+2 \rho)^2},\,\,\,\,F_{\phi\phi}=\frac{\chi \rho \,(2 \alpha +\beta ) \sin ^2(\theta)}{2(M+2 \rho)^2}.
\end{eqnarray}
Analogously, we can compute the off-diagonal components of $F_{\mu\nu}$. The only non vanishing ones are

\begin{eqnarray}\label{comnondiagf}
F_{t\rho}&=&-\frac{4 \rho^{3/2} (\chi+4 M)}{(M^{2}-4\rho^{2})(M+2\rho)^{4}}\times\\
 &&\frac{\rho\,(32\alpha M^{2}+\beta \chi^{2})+\chi\,(2\alpha+\beta)(M^{2}+4\rho^{2})}{(\chi+8\rho)^{1/2}(2M^{2}+\chi \rho)^{1/2}}\nonumber\\
F_{\rho t}&=&-\frac{4\rho^{3/2}(8M\alpha+\chi\beta)(\chi+8\rho)^{1/2}(2M^{2}+\chi \rho)^{1/2}}{(M-2\rho)(M+2\rho)^{5}}\nonumber.
\end{eqnarray}
The components of $F_{\mu\nu}$, as (\ref{comdiagf}) and (\ref{comnondiagf}) reveal, depend on $\alpha$, $\beta$, $M$, and on the boost generator $\chi$. In turn, the absence of $\gamma$ is due to the fact that $F^{(3)}_{\mu \nu}= T g_{\mu \nu}=0$, and then, it plays no role at all under the present circumstances.

If happen that $M\neq0$, further constraints in the parameter space must be taken into account in order to guarantee the vanishing of $F_{\mu\nu}$. In particular, the purely angular sector of $F_{\mu\nu}$ will vanish only if $\chi(2\alpha + \beta)=0$ (see (\ref{comdiagf})), which lead us to the two following possibilities:

\bigskip
\emph{Case 1: $\chi=0$}. In this case, the non vanishing components of $F_{\mu \nu}$ are:
\begin{eqnarray}\label{comdiagfchi0}
F_{tt}&=&-\frac{512 M^2 \rho^4 \alpha }{(M+2 \rho)^8},\,\,\,\,F_{\rho\rho}=-\frac{32 M^2 \alpha }{\left(M^2-4 \rho^2\right)^2}\nonumber\\
F_{t\rho}&=&F_{\rho t}=-\frac{128 M^{2} \rho^{2} \alpha}{(M-2 \rho)(M+2 \rho)^5}.
\end{eqnarray}
We see, then, that $\alpha=0$ in order to fulfill $F_{\mu \nu}=0$.

\bigskip
\emph{Case 2: $2\alpha+\beta=0$}. In this particular case we have
\begin{eqnarray}\label{comdiagfchiotro}
F_{tt}&=&-\frac{16\beta\rho^4  \left(\chi^2-16 M^2\right)}{(M+2 \rho)^8}\nonumber\\
F_{\rho\rho}&=&-\frac{\beta \left(\chi^2-16 M^2\right)}{(M^{2}-4 \rho^{2})^2} \nonumber\\
F_{t\rho}&=&-\frac{4\beta \rho^{5/2}(\chi^{2}-16M^{2})(\chi+4M)}{(M^{2}-4\rho^{2})(M+2\rho)^{4}(\chi+8\rho)^{1/2}(2M^{2}+\chi \rho)^{1/2}} \nonumber\\
F_{\rho t}&=&-\frac{4 \beta \rho^{3/2} (4M-\chi)(\chi+8\rho)^{1/2}(2M^{2}+\chi \rho)^{1/2}}{(M-2 \rho)(M+2 \rho)^5}.
\end{eqnarray}
Clearly, the condition in order to obtain $F_{\mu\nu}=0$ is to select the boost generator according to $\chi=4M$.

The results just obtained, together with Proposition 1, allow us to conclude that Schwarzschild spacetime is obtained in BI gravity, provided we set $\chi=\alpha=0$ in one hand, or $2\alpha+\beta=0$, $\chi-4M=0$, on the other. In the first case, the family of Born-Infeld gravitational theories having the Schwarzschild frame as a solution of the equations of motion, is characterized by $F_{\mu\nu}=(1-4\gamma)F^{(2)}_{\mu\nu}$ with $\gamma\neq1/4$, but otherwise free (remember that $F^{(3)}=0$). This means that $F^{(2)}_{\mu\nu}$ is identically null in this case. In the second, in turn, we have $F_{\mu\nu}=(1-4\gamma)(-F^{(1)}_{\mu\nu}+2F^{(2)}_{\mu\nu})$ (i.e., $F^{(1)}_{\mu\nu}=2F^{(2)}_{\mu\nu}$). These are genuinely different theories, and the parallelization behind them is also different because it corresponds to two radial boosts with different generator $\chi$.

In view of these results, it would seem natural to ask whether the Schwarzschild geometry arises only for these specific theories, or if it can be obtained also within the more general framework provided by the three pieces $F^{(i)}_{\mu\nu}$. In order to answer this important question, let us go back to the case with $\chi=0$, and to the corresponding components of $F_{\mu\nu}$ given in (\ref{comdiagfchi0}). A quick check shows that $F_{\mu\nu}$ is actually nilpotent of grade 2, i.e. $F^{\mu}_{~ \sigma} F^{ \sigma}_{ ~ \nu}=0$ (note that eq. (\ref{comdiagfchi0}) implies $Tr(F_{\mu\nu})=0$). Then, Proposition 2 applies in this case. The associated geometric energy-momentum tensor turns out to be null here because the geometry in question also solves the Einstein's field equations (see eq. (\ref{formasug})). This means that the frame (\ref{boosteadas}) with $\chi=0$, give us the vacuum Einstein field equations adapted to spherical symmetry, and then, lead us to the Schwarzschild solution for all values of $\alpha$ and $\beta$. Note that $F^{(3)}=0$ since the very beginning, because the frame (\ref{boosteadas}) with the boost (\ref{funcC}) gives $T=0$ for all $\chi$. In other words, the Schwarzschild geometry remains as a solution of the BI gravity with arbitrary parameters $\alpha$ and $\beta$ in the action, provided $\chi=0$ in (\ref{funcC}).

We have seen that Schwarzschild spacetime emerges out from BI gravity for all $\alpha$ and $\beta$ (provided $\chi=0$), and for $2\alpha+\beta=0$ (provided $\chi=4M$). This means that we have obtained two different parallelizations for the same spacetime, linked by a local Lorentz transformation which represents a radial boost generated by $\chi$. This radial boost, then, must be contained within the spherically symmetric remnant group \cite{Nos3} of the particular theory given by $2\alpha+\beta=0$, whose full characterization remains as an open problem at the present.

Of course, $M=0$ corresponds to Minkowski spacetime, even though the form of the tetrad field is quite involved because of the boost performed. As a matter of fact, the presence of the $\chi$-term in (\ref{funcC}) makes $Z(\rho)$ different from one when $M=0$. In this way, the Euclidean frame  $e^{0}=dt,\,e^{i}=dx_{i}$ obtained from (\ref{asframe}) by setting $M=0$, is being transformed by means of a radial boost with $\beta(\rho)=ArcCosh(1+\chi/4\rho)$ (see eq. (\ref{funcC})). Evidently, both frames lead to a consistent massless solution of the field equations in vacuum.

\section{Concluding comments}
We proceed now to summarize and further comment on the preceding results. The equations of motion of Born-Infeld determinantal gravity formulated in Weitzenb\"{o}ck spacetime, released in their entirety here for the first time, had revealed a number of important properties which will be the starting point for future developments. In particular, we have stated two propositions that provide crucial information regarding the parallelization process underlying the dynamics of the theory. These two results, used to characterize the emergence of the Schwarzschild spacetime within the conceptual body of the theory, will be the starting point for any further study concerning spherical symmetry in BI gravity.

At first glance, the results obtained in section IV concerning the appearance of the Schwarzschild solution, seem a bit disappointing. However, a closer examination indicates that it is neither fair nor convenient to maintain this pessimistic point of view. Operatively speaking, the first proposition studied above offers a methodological tool to reobtain GR solutions within the full determinantal theory; just take a vielbein $e^{a}$ representing a GR solution $g=e^{a}e^{b}\eta_{ab}$ (possibly with cosmological constant), and act over it with a suitable element $\Delta^{a}_{b}$ of the Lorentz group $SO(1,D-1)$ in such a way that $\tilde{e}^{a}=\Delta^{a}_{b}e^{b}$ assures the condition $F_{\mu\nu}(\tilde{e}^{a})=C g_{\mu\nu}$. The transformed frame $\tilde{e}^{a}$ is then a solution of the determinantal field equations, which for the case $C=0$, will represent the same GR spacetime given by $g$ (see Remark 1). But the frame $\tilde{e}^{a}$ will be, in general, highly non trivial (see, e.g., eq. (\ref{boosteadas}) for the Schwarzschild case).  This non triviality is representative of the lack of Lorentz invariance of BI gravity in the strong field regime, and  the fact that the equations of motion determine the full tetrad components, not just those of $g$, establishing in this way a certain spacetime parallelization (which can be non unique, as we commented in the lasts paragraphs of section IV). It is our conviction that the preferred frames so arising (connected one each other by transformations belonging to the remnant group of the spacetime in question), are carriers of crucial information regarding the dynamics of the gravitational field in the very strong field regime.

Nevertheless, and quite importantly, Proposition 1 provides no information about potential deformations of GR geometries, except for the fact that they will be inevitably represented by a determinantal tensor such that $F_{\mu\nu}\neq C g_{\mu\nu}$. This is actually what occurs in the previous, regular solutions reported in \cite{Nos1}, \cite{Nos2} and \cite{Nos3}. Then, the proposition leaves entirely open the question regarding the existence of regular, vacuum black holes within BI gravity. The full field equations (\ref{varfin}) will be of capital importance at this quest.

To some extent, Proposition 2 has longer range consequences, even though it does not include proposition 1 on formal grounds, except for the particular case $K=1$ (i.e., $F_{\mu\nu}=0$). It actually allowed us to conclude that, given the 2-nilpotency of the frame (\ref{boosteadas}) with $\chi=0$, the Schwarzschild geometry remains as a solution of BI gravity for free $\alpha$ and $\beta$, i.e., for the whole 2-parametric families of BI theories. What it makes proposition 2 so valuable, is the fact that it provides an explicit expression for the geometric energy-momentum tensor arising as a consequence of the non trivial parallel 1-form field underlaying the parallelization process. This tensor $T^{ (\textbf{g})\lambda}_{~ ~ ~ ~ a}$, defined in (\ref{tensemgeo}) and (\ref{lagrangeomet}), represents a key component in the understanding of the regularity aspects of the theory, and its properties concerning the energy conditions involved in the singularity theorems will be a matter of future research.

\textbf{\emph{Acknowledgments}}. The authors want to thanks D. Mazzitelli for the many constructive comments he made about this project. This work was supported by CONICET and Instituto Balseiro. F. F. is member of Carrera del Investigador Científico.

\end{document}